\newcounter{mynotes}
\declaretheorem[within=section]{theorem}
\declaretheorem[sibling=theorem]{corollary}
\declaretheorem[sibling=theorem]{lemma}
\declaretheorem[sibling=theorem]{definition}
\newcounter{termcounter}
\renewcommand{\thetermcounter}{\Alph{termcounter}}
\crefname{term}{term}{terms}
\def\term{\@ifnextchar[\term@optarg\term@noarg}
\def\term@optarg[#1]#2{%
  \textup{(#1)}%
  \def\@currentlabel{#1}%
  \def\cref@currentlabel{[][2147483647][]#1}%
  \cref@label[term]{#2}}
\def\term@noarg#1{%
  \refstepcounter{termcounter}%
  \textup{(\thetermcounter)}%
  \cref@label[term]{#1}}
\newcommand{\nfrac}{\nicefrac}
\newcommand{\ignore}[1]{}
\newcommand{\bits}{\{0,1\}}
\definecolor{DSred}{rgb}{1,0,0}
\renewcommand{\leq}{\leqslant}
\renewcommand{\geq}{\geqslant}
\renewcommand{\epsilon}{\varepsilon}
\newcommand{\eps}{\epsilon}
\newcommand{\R}{\mathbb{R}}
\newcommand{\cC}{\mathcal C}
\renewcommand{\cref}{\Cref}
\title{An explicit sparse recovery scheme in the $\ell_1$-norm}
\author{Arnab Bhattacharyya and Vineet Nair\\[1em]
{\small Indian Institute of Science}\\[-.2em]
{\small Bangalore, India}\\[-.2em]
{\small \texttt{\{arnabb, vineet.nair\}@csa.iisc.ernet.in}}}
\newcommand{\vecx}{\bm{x}}
\newcommand{\vecy}{\bm{y}}
\newcommand{\vecz}{\bm{z}}
\date{\today}
\begin{document}
\maketitle

\begin{abstract}
Consider the {\em approximate sparse recovery} problem: given $A\vecx$, where $A$ is a known $m$-by-$n$ dimensional matrix and $\vecx \in \R^n$ is an unknown (approximately) sparse vector, recover an approximation to $\vecx$. The goal is to design the matrix $A$ such that $m$ is small and recovery is efficient.  Moreover, it is often desirable for $A$ to have other nice properties, such as explicitness, sparsity, and discreteness. 

In this work, we show that we can use spectral expander graphs to explicitly design binary matrices $A$ for which the column sparsity is optimal and there is an efficient recovery algorithm ($\ell_1$-minimization). In order to recover $\vecx$ that is close to $\delta n$-sparse (where $\delta$ is a constant), we design an explicit binary matrix $A$ that has $m = O(\sqrt{\delta} \log(1/\delta)\cdot n)$ rows and has $O(\log(1/\delta))$ ones in each column. Previous such constructions were based on unbalanced bipartite graphs with high vertex expansion, for which we currently do not have explicit constructions.  In particular, ours is the first explicit non-trivial construction of a measurement matrix $A$ such that $A\vecx$ can be computed in $O(n \log(1/\delta))$ time. 
\end{abstract}
\newpage

\section{Introduction}

\subsection{Background}

High school linear algebra teaches us that given an $m$-by-$n$ matrix $A$ and a vector $\bm{y} \in \R^m$, if $m<n$, there can be infinitely many vectors $\vecx$ such that $A\vecx = \bm{y}$. So, it comes as a surprise that under a natural assumption, it is actually possible to recover $\vecx$ given $A\vecx$, and even to do so efficiently, with $m$ much smaller than $n$. The assumption is that the unknown vector $\vecx$ is {\em sparse}, meaning that most of its components are zero. This phenomenon has become an intense subject of research over the last decade. The vector $A\vecx$ is called the {\em measurement vector} or {\em sketch} of $\vecx$.

The study of sparse signal recovery has several motivations. One important application of the theory is to {\em compressive sensing}, a fast developing area in digital signal processing. Here, $\vecx$ is a sparse signal that is being sensed (such as images or audio), and $A\vecx$ is the measurement made of the signal by the sensor. The measuring apparatus directly measures $A\vecx$ instead of $\vecx$, a much smaller vector if $m \ll n$. The linearity of the measurement process is justified by the fact that one can often design hardware to implement taking the inner product of two real vectors. 
Another application of sparse approximation is to {\em data stream algorithms}. Here, the input is presented as a stream, and $\bm{x}$ denotes the frequency vector of the stream (i.e., $x_i$ is the total number of times the $i$'th item has been seen in the stream). The dimension of $\bm{x}$ is the total number of possible items, and so, for a streaming algorithm, it is preferable to maintain a sketch $A\vecx$ instead of $\vecx$ itself with $m \ll n$. The linearity of this compressing scheme makes updates to the frequency vector easy to maintain. A third application is to {\em combinatorial group testing}. Here, the vector $\vecx$ represents a universe of $n$ items, out of which some unknown $s$ are {defective} in some regard. One is allowed to conduct tests on chosen subsets of the items, where each test reveals the number\footnote{In the boolean setting, each test reveals whether there exists a defective item in the setup. The work here considers multiplication over the reals.} of defective items in the subset. The goal is to conduct a small number of tests to find out the defective set of items. If the tests are done nonadaptively (which is often a natural requirement), then the experiment consists of multiplying the unknown $\vecx$ with a binary matrix with a small number of rows and then attempting to recover $\vecx$ from the product.

We now formally define the {\sc approximate sparse recovery} problem. Let us call a vector $\vecx\in \R^n$ {\em $s$-sparse} if it has at most $s$ non-zero elements. Also, we define $\sigma_s(\vecx)_p$ to be $\min \{\|\vecx-\bm{z}\|_p: \bm{z} \text{ is }s\text{-sparse}\}$. Note that $\sigma_s(\vecx)_p$ is achieved by the $\vecz$ that consists of the $s$ largest (in magnitude) components of $\vecx$.

\begin{definition}[Approximate Sparse Recovery]\label{def:rec}
\footnote{One somewhat unusual feature in this definition is that we bound the $\ell_1$ norm of the error vector $\bm{e}$ instead of the $\ell_2$ norm. This is for for purposes of analysis. 
}
Fix reals $C_1, C_2 > 0$, a matrix $A \in \R^{m \times n}$, an integer $s\geq 1$ and reals $p,q,\eta \geq 0$. The {\em $(C_1,C_2)$-{\sc approximate sparse recovery} problem for $A$ with sparsity $s$, noise $\eta$, and $\ell_p/\ell_q$-guarantee} is the following: given a vector $\vecy \in \R^m$ where $\vecy = A\vecx + \bm{e}$ for unknown vectors $\vecx, \bm{e} \in \R^n$ with $\|\bm{e}\|_1 < \eta$, find a vector $\vecz \in \R^n$ such that:
\begin{equation}\label{eqn:obj}
\|\vecz-\vecx\|_p \leq C_1 \cdot \sigma_s(\vecx)_q + C_2 \cdot \eta
\end{equation}
If $p=q=1$, then their mention will be omitted. The matrix $A$ is called the {\em measurement matrix}.
\end{definition}

Thus, if $\vecy = A\vecx$ where $\vecx$ is $s$-sparse and $\eta = 0$, then an algorithm solving the above problem is required to output $\vecx$ itself.  In addition, our definition accomodates the cases when $\vecx$ is not exactly sparse and when $\vecy$ is not exactly equal to $A\vecx$ (due to a bounded measurement error $\eta$). Note that the recovery algorithm is required to satisfy the guarantee \cref{eqn:obj} for {\em all}\footnote{There has been other work, mostly in the data streaming community, that gives a probabilistic guarantee for each vector $\vecx$, but we will focus only on the above stronger formulation henceforth.} possible $\vecx$.

In this work, we will concentrate on the {\em $\ell_1$-minimization} or {\em basis pursuit} recovery algorithm:
\begin{align}\tag{P1}\label{eqn:l1}
\min \|\vecz\|_1 & &\text{subject to} & & \|\vecy-A\vecz\|_1 < \eta
\end{align}
The above minimization problem can be cast as a linear program and can be solved in polynomial time. The breakthrough works which sparked the modern research in sparse recovery \cite{CandesR06, CandesT05, CandesT06, Donoho06, CandesRT06a, CandesRT06b, Donoho06b} showed that if $A$ is a ``random matrix" (e.g., a matrix with i.i.d. Gaussian entries or with a random subset of the rows from the discrete Fourier transform matrix \cite{CandesRT06b}), then $\ell_1$-minimization solves the approximate sparse recovery problem with $\ell_2/\ell_1$-guarantee. Note that these matrices are dense, and there is also an issue of precision in representing real number matrix entries. 

On the other hand, we often want the measurement matrix $A$ to be binary and sparse.  In compressive sensing, the encoding time (i.e., the time needed to compute the compression $A\vecx$ when given $\vecx$) is determined by the sparsity of $A$, and discreteness of the matrix $A$ makes the implementation in hardware of the measuring device more feasible. In data stream processing, when a new item arrives (i.e., when the existing frequency vector $\vecx$ needs to be incremented by $\bm{e}_i$ for some $i \in [n]$), the time needed to update the sketch is determined by the maximum number of ones in a column of $A$. In group testing, the matrix $A$ necessarily needs to be binary, and sparsity is also helpful because the tests may not be accurate when they are conducted on a very large pool. Moreover, if the interior-point method is used to solve the program (\ref{eqn:l1}), a sparse matrix $A$ also speeds up the recovery process as the interior-point algorithm repeatedly multiplies $A$ to vectors.

An additional property we often want measurement matrices to satisfy is {\em explicitness}, i.e., constructible in time polynomial in $n$. In \cite{CandesRT06b}, the matrices $A$ are generated at random, and there is no known efficient algorithm for verifying that a matrix does indeed have the restricted isoperimetry property needed for the analysis. This is an issue for compressive sensing, since the matrix will be implemented in hardware and will sense a huge number of signals over time. Also, from a theoretical standpoint, it is an interesting question to understand whether the analytic properties required of measurement matrices can be achieved deterministically. A blog post by Tao \cite{Taoexplicitblog} highlights these issues.

\subsection{Our result}
Below, we show the current best results for binary, sparse and explicit measurement matrices recoverable using $\ell_1$-minimization, when the sparsity is $\delta n$. Here, $c > 1$ and $\eps > 0$ are some fixed constants, and we ignore multiplicative constants. \\[.5cm]

\begin{tabular}{|c|c|c|c|c|}
\hline
Paper & Sketch length & Matrix sparsity & Guarantee\\
\hline
\cite{GuruswamiLR10} $\left(\delta \leq \frac{1}{(\log n)^{c\log \log \log n}}\right)$&   $\delta^{1/\log \log n} \cdot n \log \log n$ & $n^{2-\epsilon}$ & $\ell_2/\ell_1$ \\[.2cm]
\cite{BerindeGIKS08, BerindeI08, GuruswamiUV09} $\forall \alpha > 0$ & $(\delta n)^{1+\alpha} (\log n \log \delta n)^{2+\frac2\alpha}$ & $n(\log n \log \delta n)^{1+\frac1\alpha}$ & $\ell_1/\ell_1$\\[.2cm]
\hline
\end{tabular}
~\\[.1cm]
\noindent To contrast, the best
\footnote{As far as we know, with $\ell_2/\ell_1$ guarantee, the only construction of a non-trivially sparse binary measurement matrix for which basis pursuit recovery works is the explicit construction of \cite{GuruswamiLR10}. Also, \cite{CandesRT06b} showed that $\ell_2/\ell_1$ guarantee holds for a random Bernoulli matrix, where each entry is $\pm 1$ with equal probability, but such a matrix is dense.} 
non-explicit construction of a sparse, binary measurement matrix recoverable using $\ell_1$-minimization is:\\[.2cm]
\begin{tabular}{|>{\centering\arraybackslash}p{4.9cm}|>{\centering\arraybackslash}m{3.8cm}|>{\centering\arraybackslash}m{2.8cm}|>{\centering\arraybackslash}m{1.5cm}|}
\hline
\cite{BerindeGIKS08, BerindeI08} & $\delta \log(1/\delta) \cdot n$ & $\log(1/\delta)\cdot n$ & $\ell_1/\ell_1$\\[.2cm]
\hline
\end{tabular}
~\\[.2cm]
Our main result is:
\begin{theorem}\label{thm:main}
Fix a constant $\delta \in (0,1)$. Then, for all large enough $n$, there exists a binary matrix $A \in \{0,1\}^{m \times n}$ that can be explicitly constructed and  has the following properties:
\begin{itemize}
\item
The sketch length $m = c\sqrt{\delta} \log(1/\delta) \cdot n$.
\item
$A$ has $c' \log(1/\delta)\cdot n$ nonzero entries.
\item
There exist $C_1, C_2 > 0$ such that for all $\eta \geq 0$, the $(C_1, C_2)$-{\sc approximate sparse recovery} problem for the matrix $A$  with sparsity $\delta n$, noise $\eta$ and $\ell_1/\ell_1$ guarantee can be solved using the $\ell_1$-minimization algorithm.
\end{itemize}
Here, $c$ and $c'$ are independent of $\delta$ and $n$, while $C_1$ and $C_2$ are independent of $n$.
\end{theorem}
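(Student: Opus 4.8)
The plan is to decouple the construction from the analysis and to route the analysis through a nullspace condition rather than an exact isometry. First I would record the standard reduction: for the basis pursuit program \eqref{eqn:l1} to solve the $(C_1,C_2)$-approximate recovery problem with $\ell_1/\ell_1$ guarantee at sparsity $s=\delta n$, it suffices that $A$ obeys a \emph{robust restricted nullspace property}, i.e.\ that there are constants $\rho<1$ and $\tau$ such that for every $\vecz\in\R^n$ and every $S\subseteq[n]$ with $|S|\le s$ one has $\|\vecz_S\|_1 \le \rho\,\|\vecz_{S^c}\|_1 + \tau\,\|A\vecz\|_1$. A routine triangle-inequality/duality argument then converts this into the guarantee \eqref{eqn:obj} with $C_1,C_2$ depending only on $\rho,\tau$, hence independent of $n$. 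The whole problem therefore reduces to building an explicit binary $A$ of the stated dimensions and column weight whose kernel (and near-kernel) vectors are $\ell_1$-spread, in the sense that no set of $\delta n$ coordinates can hold a $\rho$-fraction of the mass.

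For the construction I would take $A$ to be the bipartite adjacency matrix of an explicit spectral expander with $n$ left vertices (the columns), $m=\Theta(\sqrt\delta\,\log(1/\delta)\,n)$ right vertices (the rows), and left-degree $d=\Theta(\log(1/\delta))$, so that each column has exactly $d$ ones (matching the claimed $c'\log(1/\delta)\cdot n$ nonzeros) and the average row weight is $\Theta(1/\sqrt\delta)$. Explicitness is inherited from known deterministic expander families; the only property I would extract is a small second singular value $\sigma_2$, equivalently the expander-mixing bound that $\bigl|e(S,T)-d|S||T|/m\bigr|\le \sigma_2\sqrt{|S||T|}$ for all $S\subseteq[n]$, $T\subseteq[m]$.

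The analytic heart is to turn this discrepancy bound into $\ell_1$-spreading of near-kernel vectors. The natural first attempt follows Berinde--Indyk: for a support $S$ with $|S|\le\delta n$, bound the collision count $\tfrac12(\|A\mathbf{1}_S\|_2^2-d|S|)$ by expanding $\mathbf{1}_S$ in the singular basis, giving $\|A\mathbf{1}_S\|_2^2\le d^2|S|^2/m+\sigma_2^2|S|+d|S|$, and then feed this (via magnitude-sorting of $\vecz$ and Abel summation over prefixes) into a lower bound of the form $\|A\vecz\|_1\ge d\,\|\vecz_S\|_1-\mathrm{err}$. The error splits into two pieces, $d|S|/m=\Theta(\sqrt\delta)$ and $\sigma_2^2/d=\Theta(n/m)$; the first is comfortably $o(1)$, and it is the presence of the second that both pins down the sketch length and creates the difficulty.

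The main obstacle is exactly the term $\sigma_2^2/d$. For a biregular bipartite graph the Alon--Boppana bound forces $\sigma_2^2=\Theta(d_R)=\Theta(1/\sqrt\delta)$, so $\sigma_2^2/d=\Theta\bigl(1/(\sqrt\delta\log(1/\delta))\bigr)$, which exceeds $1$ for small $\delta$: spectral expanders are \emph{not} lossless vertex expanders, and the crude collision/unique-neighbor accounting simply cannot be pushed below the threshold $\rho<1$ for the linear-size sets $|S|=\delta n$ at issue here. The real work, therefore, is to replace the worst-case collision bound by an argument that genuinely exploits the cancellation in $A\vecz$ — the zero-sum constraint $\sum_{i\in N(j)}z_i=0$ at each row $j$ for $\vecz\in\ker A$ — so as to show that near-kernel vectors are forced to be approximately flat and hence $\ell_1$-spread. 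I expect this cancellation-based estimate to be the crux of the paper, with the sketch length $m\sim\sqrt\delta\,\log(1/\delta)\,n$ arising as the balance point at which the averaged spectral discrepancy loses to the main term $d\|\vecz\|_1$; once the margin $\rho<1$ is secured, extracting the $n$-independent constants $C_1,C_2$ is routine.
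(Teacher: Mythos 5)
Your opening reduction (robust null space property $\Rightarrow$ basis pursuit gives the $\ell_1/\ell_1$ guarantee with $n$-independent constants) is exactly the paper's first step, and your diagnosis of why the direct approach fails is also correct. But that is where your proposal stops, and the step you defer — the ``cancellation-based estimate'' that would rescue the adjacency-matrix construction — is not in the paper and is never supplied by you; this is a genuine gap. The paper's resolution is structural rather than analytic: it does \emph{not} take $A$ to be the adjacency matrix of a bipartite spectral expander at all. Instead it uses a two-level Tanner construction: take a $d$-regular $(N,d,\lambda)$-spectral expander $G$ with \emph{constant} degree $d > 16/\delta$ and $\lambda < 3\sqrt d$ (explicit by Ramanujan graphs), pass to its double cover $H$, index the $n = Nd$ coordinates of $\vecx$ by the edges of $H$, and for every vertex $v$ stack the local measurements $\cC_0 \vecx_{\Gamma(v)}$, where $\cC_0 \in \{0,1\}^{k \times d}$ is a constant-size inner matrix which itself satisfies the robust null space property of order $\delta_0 d$ (with $\delta_0 = 2\sqrt\delta$), has $k \le 100\,\delta_0 \log(1/\delta_0)\, d$ rows and column weight $O(\log(1/\delta_0))$; such a $\cC_0$ exists by a probabilistic argument and, since $d$ is a constant, can be found by brute force, so explicitness is preserved. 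This delegation is precisely what evades the Alon--Boppana obstruction you ran into: spectral expansion is never asked to produce $\ell_1$-spreading (which, as you correctly argue, it cannot at these parameters); the spreading is supplied locally by $\cC_0$, and the expander is used only combinatorially.

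Concretely, the analysis is Z\'emor's decoding argument transplanted to the nullspace setting. If the support set $S$ (of size $\le \delta N d$) meets every left vertex in at most $\delta_0 d$ edges, then applying the inner null space property of $\cC_0$ at each left vertex and summing gives the null space inequality for $A$ (\cref{basic}). For general $S$, one iteratively peels off $T_i$, the edges of the residual set $S_{i-1}$ whose left (respectively right, alternating with parity of $i$) endpoint has degree $\le \delta_0 d$ in $S_{i-1}$; the Expander Mixing Lemma shows the vertex sets $V_i$ carrying the residual sets shrink geometrically, $|V_{i+1}| < |V_i|/3$ (\cref{zemor}), so the decomposition terminates, and summing the per-piece inequalities (each edge counted at most twice) yields the robust null space property of order $\delta N d$ with $\rho = \frac{2\rho_0}{1-\rho_0}$, $\tau = \frac{\tau_0}{1-\rho_0}$ (\cref{thm:sec}). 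So where your plan needed a new analytic inequality exploiting cancellation in $A\vecz$ — an inequality that, for your matrix, may simply be false — the paper changes the matrix so that only local sparsity (handled by $\cC_0$) and a termination argument (handled by expander mixing) are required.
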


To construct $A$, we use the same basic idea as \cite{GuruswamiLR10}, as we explain in the next section. Somewhat surprisingly, we show that if we care about an $\ell_1/\ell_1$-guarantee, instead of\footnote{In \cite{CohenDD09}, it's shown that ``$\ell_2 \leq \frac{C}{\sqrt{s}}\ell_1$'' is a stronger guarantee than ``$\ell_1 \leq (1+O(C)) \ell_1$''. } an $\ell_2/\ell_1$-guarantee as Guruswami et al. \cite{GuruswamiLR10} do, the sparsity required for the matrix drops to what is required for the best non-explicit construction \cite{BerindeI08}.  A limitation of our result is that it only applies to when $\delta$ is a constant, whereas the other results cited above work for subconstant $\delta$ (and in fact, \cite{GuruswamiLR10} does not apply for linear sparsity). However, the case of constant $\delta$ is of interest as well. 

\ignore{Our proof of \cref{thm:main} also applies to some extensions of \cref{def:rec}. Suppose that for the error vector $\bm{e}$ in \cref{def:rec}, we have $\|\bm{e}\|_s < \eta$ for some arbitrary $s \geq 0$. Naturally, we also modify the $\ell_1$-minimization program (\ref{eqn:l1}) to require that $\|\vecy-A\vecz\|_s < \eta$. Then, \cref{thm:main} still holds true. Of course, the $\ell_1$-minimization problem may not be solvable in polynomial time for arbitrary norms, but the setting where $s=1$ is commonly considered and here, the modification of (\ref{eqn:l1}) reduces to solving a linear program. Indeed, Berinde et al. \cite{BerindeGIKS08, BerindeI08} restrict themselves to the case $\|e\|_1 < \eta$, whereas our analysis is more flexible towards the norm used for bounding the error vector. }

\subsection{Relation to Coding Theory and Expander Graphs}

Our construction of the measurement matrix is inspired by a natural analogy between sparse recovery and linear error-correcting codes over a finite alphabet. One can view sparse recovery as error correction over the reals in the following sense. Just as a linear error-correcting code $\mathcal{C}$ over $\{0,1\}^n$ of distance $d$ has the property that all nonzero codewords of $\mathcal{C}$ have weight at least $d$, sparse recovery is possible only if the kernel of the measurement matrix $A$ does not contain any nonzero vector with $\leq 2s$ nonzero entries. For otherwise, we would have two $s$-sparse vectors $\vecx_1$ and $\vecx_2$ with $A\vecx_1 = A\vecx_2$ and so, recovery of $\vecx_1$ given $A\vecx_1$ would be impossible. The connection between sparse recovery and error-correcting codes is of course quite well-known; see Cheraghchi's \cite{Cheraghchi11} for a more detailed study of the interplay between the two subjects. 

In fact, the works \cite{BerindeGIKS08, BerindeI08, GuruswamiLR10} cited above achieving the current best results for explicit, sparse measurement matrices also use this connection. To implement this connection, these works (and the present one) devise analogs of {\em expander codes} for sparse recovery. \ignore{An {\em expander graph}, in our context, is a bipartite graph with $n$ vertices on the left, $m$ vertices on the right, and left-degree $d$, having the property that every small set of vertices on the left expands to a proportionately ``large'' set of vertices on the right. }In the by-now classic work in coding theory, Sipser and Spielman \cite{SipserS96} showed two ways in which expander graphs can yield error-correcting codes.   

The first approach is to use a good {\em vertex expander}. For the purposes of this discussion, let this be a graph in which every subset $S$ of vertices on the left of size at most $\delta n$ has more than $\frac{3}{4}d|S|$ neighbors on the right. Then, Sipser and Spielman proved that a code whose parity check matrix is the adjacency matrix of such a vertex expander has distance more than $\delta n$ and that the decoding can be carried out in linear time. The rate of the code corresponds to $1-m/n$, and hence it is important to ask how small $m$ can be. A straightforward probabilistic argument shows that a random bipartite graph with $d = O(\log(1/\delta))$ and $m=O(\delta \log(1/\delta)\cdot n)$ has the desired vertex expansion. However, it was not known at the time of \cite{SipserS96} whether such high-quality vertex expanders could be built explicitly. Later, a sequence of works \cite{CapalboRVW02, TashmaU06, TashmaUZ07,  GuruswamiUV09} investigated this question; the current record is held by Guruswami et al. \cite{GuruswamiUV09} who showed that the desired vertex expander can be built deterministically in polynomial time with $d = (\log n \log \delta n)^{1+\frac1\alpha}$ and $m = O((\delta n)^{1+\alpha} (\log n \log \delta n)^{2+\frac2\alpha})$ for all $\alpha > 0$. It is still not known how to get explicit vertex expanders with the optimal parameters.

The second approach is to use the edge-vertex incidence graph of a {\em spectral expander}. For a $d$-regular (non-bipartite) graph $G$, we will say it is spectrally expanding if the second largest (in magnitude) eigenvalue of the adjacency matrix of $G$ is $< d^{0.9}$. Also, let $\cC_0 \subset \{0,1\}^d$ be a finite linear code, i.e., a linear subspace. Now, in a manner to be made precise in \cref{sec:prelim}, one can construct the {\em Tanner code} $X(G,\cC_0)$ with message length $n=Nd/2$ where $N = |V(G)|$.  Sipser and Spielman showed that if $G$ is a spectral expander and $\cC_0$ is a code with relative distance $\delta_0$, then $X(G,\cC_0)$ is a code of length $\bits^n$ with relative distance $\approx \delta_0^2$ and is decodable in linear time. Moreover, spectral expanders with optimal parameters {\em can} be constructed explicitly unlike vertex expanders. In fact, Margulis \cite{Mar73} and Lubotzky, Phillips and Sarnak \cite{LubotzkyPS88} showed explicit constructions of {\em Ramanujan graphs} with second eigenvalue equal to $2\sqrt{d-1}$ (which is the best possible). Using this, we get explicit constructions of linear-time decodable codes with rate approximately $1-2h(\delta_0)$, where $h$ is the binary entropy function.

Both constructions of expander codes have found parallels in the analysis of sparse recovery using basis pursuit. Berinde et al. \cite{BerindeGIKS08, BerindeI08} showed that if $A$ is the adjacency matrix of a vertex expander, then sparse recovery is possible (with $\ell_1/\ell_1$-guarantee). This, together with the aforementioned explicit construction of \cite{GuruswamiUV09}, implies the results  attributed to \cite{BerindeGIKS08, BerindeI08} in the previous section. Guruswami, Lee and Razborov \cite{GuruswamiLR10} showed that if $A$ is constructed using the Tanner code $X(G,\cC_0)$ where $G$ is a spectral expander and $\cC_0$ is a finite size measurement matrix with optimal properties, then sparse recovery can be performed for $A$ with $\ell_2/\ell_1$ guarantee\footnote{In fact, their work is primarily motivated by the geometric question of finding Euclidean sections in $\ell_1^n$, for which we do not make any progress here.}. Subsequent work by Guruswami, Lee and Wigderson \cite{GuruswamiLW08} furthered this investigation and provided better parameters but at the expense of some randomness.

Our work continues in the vein of \cite{GuruswamiLR10, GuruswamiLW08} by studying continuous versions of Tanner codes. Just as in \cite{GuruswamiLW08}, our measurement matrix is simply the parity check matrix of the corresponding Tanner code. The arguments to establish the claims of the last section are clean and simple, paralleling the proof devised by Z\'emor \cite{Zemor01} to improve upon Sipser and Spielman's analysis of decoding Tanner codes.

\subsection{On faster recovery}

Thus far we have only discussed the $\ell_1$-minimization algorithm for sparse recovery. There exist other ``combinatorial" algorithms, which typically proceed by iteratively identify and removing large components of the unknown sparse vector $\vecx$. There are two types of such results. The first type applies when $\vecx$ is known to be exactly $s$-sparse, i.e., $\sigma_s(\vecx) = 0$. Here, Indyk \cite{Indyk08}  and Xu and Hassibi \cite{XuH07} showed an explicit construction of a binary matrix $A \in \{0,1\}^{m \times n}$ such that $m = s \cdot 2^{\text{poly}(\log \log n)}$ and such that $\vecx$ can be recovered from $A\vecx$ in time $O(m \text{ polylog}(n))$. 

The second type of result works for any $\vecx$, not necessarily sparse, and solves the approximate sparse recovery problem. The best result in this direction is by Indyk and Ru\v{z}i\'c \cite{IndykR08} who showed that for any $\delta > 0$, when the measurement matrix $A$ is the adjacency matrix of a vertex expander of the appropriate size, for any $\epsilon, \eta> 0$, there is an algorithm solving the $(1+\epsilon,6)$-{\sc approximate sparse recovery} problem for $A$ with sparsity $\delta n$, noise $\eta$ and $\ell_1/\ell_1$ guarantee. The parameters of $A$ match those of the construction from \cite{BerindeGIKS08, BerindeI08}. The recovery algorithm runs in time $O(n \log n)$ in contrast to the time required for solving the $\ell_1$-minimization problem, which heuristically requires about $\tilde{O}(n^{1.5})$ time. A subsequent work by Berinde, Indyk and Ru\v{z}i\'c \cite{BerindeIR08} simplified the algorithm and made it efficient in practice, at the cost of an extra logarithmic factor in the theoretical bound for the running time.

For our construction, we note firstly that when the unknown $\vecx$ is exactly $\delta n$ sparse, an $O(n \log n)$ time iterative algorithm for recovery (with the same parameters for the measurement matrix as in \cref{thm:main}) directly follows from an observation by Guruswami et al.  in section 3.2 of \cite{GuruswamiLW08}). It is an interesting open question to have a similar algorithm for the approximate sparse recovery problem.

\section{Preliminaries}\label{sec:prelim}

\subsection{Basic notions}
We use $[n]$ to denote the set $\{1, 2, \dots, n\}$, and for a subset $S \subseteq [n]$, $\bar{S}$ denotes $[n]\setminus S$. For $\vecx \in \R^n$ and $I \subseteq [n]$, we let $\vecx_I$ denote the restriction of $\vecx$ to the coordinates in $I$.

We next define the graphs upon which our main construction is based.
\begin{definition}[Spectral expanders]
A simple, undirected graph $G$ is an {\em $(N,d,\lambda)$-expander} if $G$ has $N$ vertices, is $d$-regular, and the second largest eigenvalue of the adjacency matrix of $G$ in absolute value is at most $\lambda$.
\end{definition}

The combinatorial implication of spectral expansion is the following useful fact (see e.g. Lemma 2.5 of \cite{HooryLW06}):
\begin{lemma}[Expander Mixing Lemma]\label{expmix}
Let $G=(V,E)$ be a $(N,d,\lambda)$-expander. Then, for all $S, T \subseteq V$:
$$\left| |E(S,T)| - \frac{d|S||T|}{N} \right| \leq \lambda \sqrt{|S||T|}$$
where $E(S, T)$ is the set of edges between sets $S$ and $T$ with the edges in $(S \cap T) \times (S \cap T)$ counted twice.
\end{lemma}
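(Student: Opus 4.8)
The plan is to prove this through the spectral decomposition of the adjacency matrix of $G$, which is the textbook route for bounds of this kind. Since $G$ is simple and undirected, its adjacency matrix $A$ is a real symmetric $N \times N$ matrix, so it admits an orthonormal eigenbasis $v_1, \dots, v_N$ with real eigenvalues which I order as $\lambda_1 \geq \lambda_2 \geq \cdots \geq \lambda_N$. Because $G$ is $d$-regular, the normalized all-ones vector $v_1 = \mathbf{1}/\sqrt{N}$ is an eigenvector with eigenvalue $\lambda_1 = d$, and by the definition of an $(N,d,\lambda)$-expander we have $|\lambda_i| \leq \lambda$ for every $i \geq 2$.

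The first step is to express the edge count as a quadratic form. Letting $\mathbf{1}_S, \mathbf{1}_T \in \{0,1\}^N$ be the indicator vectors of $S$ and $T$, I would verify that $|E(S,T)| = \mathbf{1}_S^{T} A\, \mathbf{1}_T$ under exactly the counting convention in the statement: the quadratic form sums $A_{uv}$ over ordered pairs $(u,v) \in S \times T$, so an edge with one endpoint in $S$ and the other in $T$ is counted once, whereas an edge with both endpoints in $S \cap T$ is counted twice, once in each orientation. Checking that this matches the stated convention is the one place that demands care, though it is routine.

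Next I would expand both indicators in the eigenbasis, writing $\mathbf{1}_S = \sum_i \alpha_i v_i$ and $\mathbf{1}_T = \sum_i \beta_i v_i$, and isolate the contribution of the top eigenvector. Since $\alpha_1 = \iprod{\mathbf{1}_S, v_1} = |S|/\sqrt{N}$ and likewise $\beta_1 = |T|/\sqrt{N}$, orthonormality gives
$$\mathbf{1}_S^{T} A\, \mathbf{1}_T = \sum_{i=1}^N \lambda_i \alpha_i \beta_i = d\,\frac{|S||T|}{N} + \sum_{i=2}^N \lambda_i \alpha_i \beta_i.$$
This displays $d|S||T|/N$ as the main term and reduces the lemma to controlling the tail $\sum_{i \geq 2} \lambda_i \alpha_i \beta_i$.

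Finally I would bound the tail using $|\lambda_i| \leq \lambda$ together with Cauchy--Schwarz and Bessel's inequality:
$$\Abs{\sum_{i=2}^N \lambda_i \alpha_i \beta_i} \leq \lambda \sum_{i=2}^N |\alpha_i||\beta_i| \leq \lambda \Paren{\sum_{i=2}^N \alpha_i^2}^{1/2}\Paren{\sum_{i=2}^N \beta_i^2}^{1/2} \leq \lambda\, \norm{\mathbf{1}_S}_2 \norm{\mathbf{1}_T}_2 = \lambda\sqrt{|S||T|},$$
where the last equality uses $\sum_i \alpha_i^2 = \norm{\mathbf{1}_S}_2^2 = |S|$ and similarly for $T$. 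Combining this with the previous display yields precisely the claimed inequality. No genuinely hard obstacle arises: the whole argument is a clean spectral calculation, and the only subtlety is confirming the edge-counting convention in the first step.
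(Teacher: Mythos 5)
Your proof is correct. The paper itself does not prove this lemma --- it is quoted as a known fact with a pointer to Lemma 2.5 of \cite{HooryLW06} --- and your argument (expanding the indicator vectors in an orthonormal eigenbasis of the adjacency matrix, extracting the $d|S||T|/N$ term from the top eigenvector $\mathbf{1}/\sqrt{N}$, and bounding the remaining sum via $|\lambda_i|\leq\lambda$, Cauchy--Schwarz, and Bessel) is exactly the standard spectral proof given there, including the correct verification that the quadratic form $\mathbf{1}_S^{\mathsf{T}} A\, \mathbf{1}_T$ matches the stated double-counting convention for edges inside $S \cap T$.
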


Because of technical reasons, we will work with a bipartite version of a given spectral expander. More precisely:
\begin{definition}[Double Cover]
The {\em double cover} of a graph $G = (V,E)$ is the bipartite graph $H = (L\cup R, E_H)$ where $L = R = V$ and for all $(u,v) \in E$, both $(u,v)$ and $(v,u)$ are in $E_H$. 
\end{definition}
The following is immediate:
\begin{corollary}[Bipartite Expander Mixing Lemma]\label{cor:mix}
Let $H=(L \cup R,E)$ be the double cover of an $(N,d,\lambda)$-expander. Then, for all $S\subseteq L$ and $T \subseteq R$:
$$\left| |E(S,T)| - \frac{d|S||T|}{N} \right| \leq \lambda \sqrt{|S||T|}$$
where $E(S, T)$ is the set of edges between sets $S$ and $T$.
\end{corollary}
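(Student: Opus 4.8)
The plan is to recognize that the bipartite statement is not a genuinely new expansion fact but merely a repackaging of \cref{expmix} through the same bilinear form. Let $A_G$ denote the adjacency matrix of $G$, and for a vertex set $U \subseteq V$ write $\mv 1_U \in \R^V$ for its indicator vector. The engine behind the usual Expander Mixing Lemma is that for any $S,T \subseteq V$ the bilinear form $\mv 1_S^\top A_G\, \mv 1_T$ equals $|E(S,T)|$ under exactly the double-counting convention stated in \cref{expmix}, and the lemma bounds its deviation from $d|S||T|/N$ by $\lambda\sqrt{|S||T|}$. So my first step is simply to record this identification $|E_G(S,T)| = \mv 1_S^\top A_G\, \mv 1_T$.

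The second step is to express the left-hand side of the corollary in the same terms. Since $L = R = V$, I identify $S \subseteq L$ and $T \subseteq R$ with subsets of $V$. By the definition of the double cover, an edge of $H$ with left endpoint $u \in S$ and right endpoint $v \in T$ exists precisely when $\{u,v\} \in E(G)$, so
\[
|E_H(S,T)| = \sum_{u \in S}\sum_{v \in T} (A_G)_{uv} = \mv 1_S^\top A_G\, \mv 1_T .
\]
With both quantities written as the identical bilinear form, the corollary follows immediately by applying \cref{expmix} to the subsets $S,T \subseteq V$. No spectral computation beyond \cref{expmix} is needed.

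The only point requiring care — and indeed the reason the double cover is introduced at all — is the matching of the two edge-counting conventions, and this is where I would slow down. In \cref{expmix}, edges lying inside $S \cap T$ are counted twice, and I would verify that this is exactly what the double cover produces: a single edge $\{u,v\}$ of $G$ with both endpoints in $S \cap T$ gives rise to two edges of $H$ between $S$ and $T$, namely the oriented pairs $(u,v)$ and $(v,u)$, each with left endpoint in $S$ and right endpoint in $T$, and these are precisely the two terms contributing to $\sum_{u\in S, v\in T}(A_G)_{uv}$. Thus the somewhat awkward double-counting convention on the non-bipartite side corresponds to honest, un-doubled edge counting on the bipartite side, which is what makes the passage clean. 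I do not anticipate any real obstacle; the entire content is this bookkeeping identity. (As an alternative one could bypass \cref{expmix} and argue directly from the spectrum of $H$, whose adjacency matrix is block anti-diagonal in $A_G$ and hence has eigenvalues $\pm\mu$ for each eigenvalue $\mu$ of $A_G$, so its second-largest magnitude eigenvalue is again $\lambda$; but since \cref{expmix} is already available, the reduction above is the more economical route.)
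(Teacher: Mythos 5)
Your reduction---identifying $|E_H(S,T)|$ with $\mathbf{1}_S^{\smallrmtrsp} A_G\, \mathbf{1}_T$ and checking that the double cover's honest edge count matches the double-counting convention of \cref{expmix}---is exactly the bookkeeping the paper treats as ``immediate,'' so your proof is correct and takes the same approach. One warning about your parenthetical alternative: it is wrong as stated, because the double cover of a $d$-regular graph always has $-d$ as an eigenvalue (eigenvector $(\mathbf{1},-\mathbf{1})$), so the second-largest-magnitude eigenvalue of $H$ is $d$, not $\lambda$; a direct spectral argument on $H$ would have to project out both $\pm d$ eigenvectors, which is why reducing to the non-bipartite lemma is the right move.
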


\subsection{Tanner Measurements}
Suppose $G$ is an $(N,d,\lambda)$-expander graph, and let $H = (L \cup R, E)$ be its double cover. The number of edges in $H$ is $dN$. Let us also label the edges in some arbitrary way from $1$ to $Nd$. Fix $\cC_0 \in \R^{k \times d}$ to be a $k$-by-$d$ matrix for some $k \leq d$.  Now, as in \cite{GuruswamiLR10, GuruswamiLW08}, we define a new measurement matrix as follows:
\begin{definition}
The {\em Tanner measurement matrix of $G$ and $\cC_0$} is the $2kN$-by-$dN$ matrix $A$ defined by:
$$A\vecx = (\cC_0 \vecx_{\Gamma(v)})_{v \in L \cup R}~~~\forall \vecx \in \R^{dN}$$
where $\Gamma(v)$ is the tuple of edges incident to $v$ ordered in increasing order. That is, $A\vecx$ is simply the concatenation of $\cC_0\vecx_{\Gamma(v)}$ for all vertices $v$ in the double cover graph $H$. 
\end{definition}

Note that in the above definition, $\vecx$ corresponds to a real number assignment to the edges of $H$. 

\subsection{Robust Null Space Property}
Is there an analytic condition on $A$ which can ensure that $\ell_1$-minimization succeeds in solving the sparse recovery proplem for $A$? There have been multiple attempts in formulating such a condition. Cand\'es et al. in \cite{CandesRT06b} derived the {\em restricted isoperimetry property} which sufficed to show that $\ell_1$-minimization succeeds for random Gaussians and Fourier matrices. However, Chandar \cite{Chandar08} showed that any {\em binary} matrix with $o(s^2)$ rows would not satisfy the RIP property. Berinde et al. \cite{BerindeGIKS08} evaded this negative result by formulating a variant of the RIP condition which can be satisfied by binary matrices with $O(s \log n/s)$ rows and is also a sufficient condition for the success of $\ell_1$-minimization. We use yet another condition which is especially convenient for our analysis; this formulation is from the wonderful recent book by Foucart and Rauhut \cite{FoucartR13}.

\begin{definition}
A matrix $A \in \R^{m \times n}$ satisfies the {\em robust null space property of order $s$} with constants $0<\rho<1$ and $\tau > 0$ if for all $S \subseteq [n]$ of size at most $s$ and for all $\vecx \in \R^n$:
$$\|\vecx_S\|_1 \leq \rho \|\vecx_{\bar{S}}\|_1 + \tau \|A\vecx\|_1$$
\end{definition}

 The robust null space condition ensures the following:
\begin{theorem}[Theorem 4.19 of \cite{FoucartR13}]
Suppose that $A \in \R^{m \times n}$ satisfies the robust null space property of order $s$ with constants $0<\rho<1$ and $\tau > 0$. Then, for all $\eta \geq 0$, $\ell_1$-minimization (\ref{eqn:l1}) solves the $\left(\frac{2(1+\rho)}{1-\rho}, \frac{4\tau}{1-\rho}\right)$-{\sc approximate sparse recovery problem} for $A$ with sparsity $s$, noise $\eta$ and $\ell_1/\ell_1$-guarantee.
\end{theorem}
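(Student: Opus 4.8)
The plan is to follow the standard stability argument for $\ell_1$-minimization, deriving everything from the robust null space property together with the optimality of the recovered vector. Fix $\vecx$ and $\bm{e}$ with $\|\bm{e}\|_1 < \eta$, set $\vecy = A\vecx + \bm{e}$, and let $\vecz$ be an output of the program \eqref{eqn:l1}. Write $\vec{v} = \vecz - \vecx$ for the recovery error; the entire goal is to bound $\|\vec{v}\|_1$. I would choose $S \subseteq [n]$ to be an index set of $s$ coordinates on which $\vecx$ is largest in magnitude, so that $\|\vecx_{\bar{S}}\|_1 = \sigma_s(\vecx)_1$, and then apply the robust null space property to the single vector $\vec{v}$ with this set $S$.

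There are three ingredients to assemble. First, \textbf{measurement fidelity}: since $\vecx$ is itself feasible for \eqref{eqn:l1} (as $\|\vecy - A\vecx\|_1 = \|\bm{e}\|_1 < \eta$) and $\vecz$ is feasible as well, the triangle inequality gives $\|A\vec{v}\|_1 \leq \|A\vecz - \vecy\|_1 + \|\vecy - A\vecx\|_1 \leq 2\eta$. Second, a \textbf{cone condition}: optimality of $\vecz$ forces $\|\vecz\|_1 \leq \|\vecx\|_1$; splitting both sides over $S$ and $\bar{S}$ and applying the reverse triangle inequality on each block yields $\|\vec{v}_{\bar{S}}\|_1 \leq \|\vec{v}_S\|_1 + 2\sigma_s(\vecx)_1$. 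Third, the \textbf{robust null space property} itself, applied to $\vec{v}$ and $S$, gives $\|\vec{v}_S\|_1 \leq \rho\|\vec{v}_{\bar{S}}\|_1 + \tau\|A\vec{v}\|_1$.

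Combining these is then mechanical. Substituting the cone condition and the measurement bound into the null space inequality gives $\|\vec{v}_S\|_1 \leq \rho(\|\vec{v}_S\|_1 + 2\sigma_s(\vecx)_1) + 2\tau\eta$, and since $\rho < 1$ this rearranges to $\|\vec{v}_S\|_1 \leq \frac{2\rho}{1-\rho}\sigma_s(\vecx)_1 + \frac{2\tau}{1-\rho}\eta$. Finally $\|\vec{v}\|_1 = \|\vec{v}_S\|_1 + \|\vec{v}_{\bar{S}}\|_1 \leq 2\|\vec{v}_S\|_1 + 2\sigma_s(\vecx)_1$ by the cone condition; plugging in the last bound and simplifying $\frac{4\rho}{1-\rho} + 2 = \frac{2(1+\rho)}{1-\rho}$ produces exactly $\|\vecz - \vecx\|_1 \leq \frac{2(1+\rho)}{1-\rho}\sigma_s(\vecx)_1 + \frac{4\tau}{1-\rho}\eta$, i.e. the $\ell_1/\ell_1$-guarantee with $C_1 = \frac{2(1+\rho)}{1-\rho}$ and $C_2 = \frac{4\tau}{1-\rho}$.

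I do not expect a genuine obstacle here, as the substance has been packaged into the robust null space property, which is assumed. The only points demanding care are bookkeeping: getting the factor $2\eta$ right (both $\vecx$ and $\vecz$ are feasible, so the measurement error doubles), deriving the cone condition cleanly from minimality rather than from an assumption of exact sparsity, and tracking constants through the final substitution so that the coefficient collapses to $\frac{2(1+\rho)}{1-\rho}$. A minor technical wrinkle is the strict inequality $\|\bm{e}\|_1 < \eta$ in the constraint, which I would handle by passing to the closed constraint $\leq \eta$ (guaranteeing a minimizer exists) or by a routine limiting argument; neither affects the stated bounds.
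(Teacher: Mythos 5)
Your proof is correct, and there is nothing to compare it against inside the paper itself: the authors do not prove this statement but import it wholesale as Theorem 4.19 of the cited Foucart--Rauhut book. Your argument --- feasibility of $\vecx$ giving $\|A(\vecz-\vecx)\|_1 \leq 2\eta$, the cone condition $\|\vec{v}_{\bar S}\|_1 \leq \|\vec{v}_S\|_1 + 2\sigma_s(\vecx)_1$ from minimality, and the robust null space property applied to $\vecz - \vecx$ --- is exactly the standard proof given in that reference, and your constants $\frac{2(1+\rho)}{1-\rho}$ and $\frac{4\tau}{1-\rho}$ come out right.
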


Our main goal from now on will be to show that the robust null space property holds for the Tanner measurement matrix of $G$ and $\cC_0$ where $G$ is an $(n,d,\lambda)$-expander (with $\lambda \ll d$) and $\cC_0: \R^{k \times d}$ itself admits a satisfactorily strong null space property.

\section{Proof of Main Result}

Suppose $\cC_0 \in \R^{k \times d}$ satisfies the robust null space property of order $\delta_0 d$ with constants $\rho_0 < 1/3$ and $\tau_0 > 0$. We want that $k \leq 100\delta_0 \log(1/\delta_0) d$ and each column of $\cC_0$ contain at most $100 \log 1/\delta_0$ ones. These parameters are achievable through random constructions for sufficiently large $d$, and so, $\cC_0$ can be found by brute force if $d$ is bounded. 

For a $(n,d,\lambda)$-expander graph $G$ and $\cC_0$ as above, let $A$ be the Tanner measurement matrix of $G$ and $\cC_0$. \cref{thm:main} is a consequence of the following:
\begin{theorem}\label{thm:sec}
Given $\delta > 0$, let $d > 16/\delta$ and $\delta_0 = 2\sqrt{\delta}$.  Let $\cC_0 \in \R^{k \times d}$ be the matrix defined above with parameters $\rho_0 < 1/3$ and $\tau_0 > 0$. Let $G$ be an $(N, d, \lambda)$-expander graph, with $\lambda < 3\sqrt{d}$. Then, the matrix $A \in \R^{2kN \times dN}$, the Tanner measurement matrix of $G$ and $\cC_0$, satisfies the robust null space property of order $\delta Nd$ with constants $\rho=\frac{2\rho_0}{1-\rho_0}$ and $\tau=\frac{\tau_0}{1-\rho_0}$. 
\end{theorem}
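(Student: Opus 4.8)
The plan is to prove the stated inequality $\|\vecx_S\|_1 \le \rho\|\vecx_{\bar S}\|_1 + \tau\|A\vecx\|_1$ for every $S\subseteq[dN]$ with $|S|\le\delta Nd$ by a \emph{localization} argument that transfers the robust null space property of $\cC_0$ from the individual vertices of the double cover $H=(L\cup R,E)$ to the global matrix $A$, in the spirit of Z\'emor's analysis of Tanner codes. The two structural facts I would lean on are that $\|A\vecx\|_1=\sum_{v\in L\cup R}\|\cC_0\vecx_{\Gamma(v)}\|_1$, and that the edge set of $H$ is partitioned both by the left vertices and by the right vertices, so that $\|\vecx_S\|_1=\sum_{u\in L}\|\vecx_{S\cap\Gamma(u)}\|_1=\sum_{w\in R}\|\vecx_{S\cap\Gamma(w)}\|_1$.

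Call a vertex $v$ \emph{good} if $|S\cap\Gamma(v)|\le\delta_0 d$ and \emph{bad} otherwise, and write $B_L,B_R$ for the bad vertices on each side. At a good vertex the support $S\cap\Gamma(v)$ has size at most $\delta_0 d$, so the order-$\delta_0 d$ robust null space property of $\cC_0$ applies with this support and gives $\|\vecx_{S\cap\Gamma(v)}\|_1\le\rho_0\|\vecx_{\Gamma(v)\setminus S}\|_1+\tau_0\|\cC_0\vecx_{\Gamma(v)}\|_1$. First I would sum this over all good \emph{left} vertices: the complements $\Gamma(u)\setminus S$ are disjoint subsets of $\bar S$, and the measurement terms sum to $\tau_0$ times the left-vertex contribution to $\|A\vecx\|_1$, so the total mass of the $S$-edges having a good left endpoint is at most $\rho_0\|\vecx_{\bar S}\|_1$ plus that contribution. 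Repeating on the right handles every $S$-edge with a good right endpoint. The only edges left uncontrolled are therefore the \emph{doubly-bad} edges $S'\defeq S\cap E(B_L,B_R)$, and combining the two estimates yields the key intermediate bound $\|\vecx_S\|_1\le 2\rho_0\|\vecx_{\bar S}\|_1+\tau_0\|A\vecx\|_1+\|\vecx_{S'}\|_1$.

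It remains to bound the doubly-bad mass $\|\vecx_{S'}\|_1$, and this is where the mixing lemma and the numerical choices enter. Since every bad vertex carries more than $\delta_0 d$ edges of $S$, counting gives $|B_L|,|B_R|<|S|/(\delta_0 d)\le\sqrt\delta N/2$, using $\delta_0=2\sqrt\delta$. Feeding this into \cref{cor:mix} bounds $|S'|\le|E(B_L,B_R)|$ by $\tfrac{d|B_L||B_R|}{N}+\lambda\sqrt{|B_L||B_R|}$; the first term is at most $\tfrac14\delta Nd$ because $\delta_0^2=4\delta$, and the second is at most $\tfrac38\delta Nd$ because $\lambda<3\sqrt d$ together with $d>16/\delta$ forces $\sqrt d<\tfrac14 d\sqrt\delta$. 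Thus the doubly-bad set is strictly smaller than $S$, by a factor at most $\tfrac58$. I would then recurse the whole argument on $S'$ in place of $S$: its doubly-bad descendants shrink geometrically and vanish after finitely many rounds, and the $\tfrac{1}{1-\rho_0}$ factors in the target constants $\rho=\tfrac{2\rho_0}{1-\rho_0}$ and $\tau=\tfrac{\tau_0}{1-\rho_0}$ should appear as the sum of the resulting geometric series — equivalently, the recursion should collapse to the single self-referential inequality $\|\vecx_{S'}\|_1\le\rho_0\|\vecx_S\|_1$, which when substituted into the intermediate bound gives $(1-\rho_0)\|\vecx_S\|_1\le 2\rho_0\|\vecx_{\bar S}\|_1+\tau_0\|A\vecx\|_1$, exactly the claim.

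The main obstacle I anticipate is precisely this last step. The mixing lemma shrinks only the \emph{cardinality} of the doubly-bad set, whereas what I need is a bound on its $\ell_1$-\emph{mass}, and large entries could in principle concentrate on few surviving edges. The real subtlety is that when the null space property is reapplied to $S'$, a vertex that is bad for $S$ but good for $S'$ is forced to use the support $S'\cap\Gamma(v)$, so its complement $\Gamma(v)\setminus S'$ picks up $S$-edges (those with a good endpoint on the opposite side) and thereby threatens to re-introduce mass already accounted for in the previous round; naively charging each round a fresh $\rho_0\|\vecx_{\bar S}\|_1$ would produce a factor growing with the number of rounds and hence an $N$-dependent constant. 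The delicate heart of the proof will be to charge these residual complement contributions correctly — recognizing them as exactly the singly-bad edges already bounded by $\rho_0\|\vecx_{\bar S}\|_1+\tau_0\|A\vecx\|_1$ at the prior level — so that the telescoping yields constants depending only on $\rho_0,\tau_0$ and not on $N$.
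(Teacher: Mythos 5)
The first half of your argument is sound and closely parallels the paper: the per-vertex application of the robust null space property of $\cC_0$ at good vertices, the intermediate bound $\|\vecx_S\|_1\le 2\rho_0\|\vecx_{\bar S}\|_1+\tau_0\|A\vecx\|_1+\|\vecx_{S'}\|_1$, and the mixing-lemma calculation showing $|S'|<\frac58\delta Nd$ are all correct. The gap is exactly where you flag it, and it is not a technicality you can defer: the ``collapse'' inequality $\|\vecx_{S'}\|_1\le\rho_0\|\vecx_S\|_1$ is never proved, does not follow from anything you established, and there is no reason it should hold --- as you yourself note, the mixing lemma controls only the \emph{cardinality} of $S'$, and nothing prevents all the mass of $\vecx_S$ from sitting on the doubly-bad edges. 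Moreover, the telescoping problem is worse than your discussion suggests: it is not only the complement terms that get recharged each round, but also the measurement term. Your level-$k$ bound charges the full $\tau_0\|A\vecx\|_1$ afresh at every level, so naive summation over the $\Theta(\log(\delta Nd))$ rounds of the recursion gives $\tau$ growing like $\log N$ even if the $\rho_0$ side were repaired. So as written, the proof establishes the theorem only in the base case where $S$ has no doubly-bad edges.

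The paper closes this gap with a different accounting, not with any mass-decay statement for $S'$. At each round, the complement term of an active good vertex $v$ is bounded by its \emph{full} neighborhood mass $\|\vecx_{\Gamma(v)}\|_1$ --- no attempt is made to stay inside $\bar S$ --- and the measurement terms are only the contributions $\|\cC_0\vecx_{\Gamma(v)}\|_1$ of the active vertices, not all of $\|A\vecx\|_1$. The key observation making this sum over rounds is that the active vertex sets are pairwise disjoint across rounds on each side of the bipartition: a vertex that is good for the current uncontrolled set $S^{(k)}$ and incident to it has all of its $S^{(k)}$-edges peeled off, hence touches no $S^{(j)}$ for $j>k$. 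Since the left (respectively right) vertex neighborhoods partition the edge set, the complement terms over \emph{all} rounds total at most $2\rho_0\|\vecx\|_1$ and the measurement terms total at most $\tau_0\|A\vecx\|_1$, independently of the number of rounds; only at the very end does one write $\|\vecx\|_1=\|\vecx_S\|_1+\|\vecx_{\bar S}\|_1$ and move $\|\vecx_S\|_1$ to the left. (The paper implements this with an alternating left/right peeling $T_i=S_{i-1}\setminus S_i$ and a Z\'emor-style mixing argument for termination, plus one refinement --- adding $\rho_0\|\vecx_{T_i}\|_1$ to both sides of each per-piece inequality before summing --- which turns the denominator $1-2\rho_0$ into $1-\rho_0$ and is what makes the hypothesis $\rho_0<1/3$, rather than $\rho_0<1/4$, sufficient.) Your symmetric doubly-bad recursion would go through verbatim with this accounting; what cannot work is charging $\rho_0\|\vecx_{\bar S}\|_1+\tau_0\|A\vecx\|_1$ per round and hoping the rounds telescope on their own.
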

\begin{proof}
Fix a set $S$ of size at most $\delta Nd$. We would like to show that for any $\vecx \in \R^{Nd}$:
$$\|\vecx_S\|_1 \leq \rho \|\vecx_{\bar S}\|_1 + \tau \|A\vecx\|_1$$

Let $H = (L \cup R, E_H)$ be the double cover of $G$. Recall that the edges of $H$ are distinctly labeled from $1$ to $Nd$. For a set $T \subseteq [Nd]$ and a vertex $v \in L \cup R$, let $\Gamma(v,T) = \{i : i \in T, \text{edge labeled }i\text{ is incident to }v\}$, and let $\deg(v,T) = |\Gamma(v,T)|$.

Let us first consider the case when for all $v \in L$, $\deg(v,S) \leq \delta_0 d$. We can easily establish the robust null space property in this case. 
\begin{lemma}\label{basic}
If for all $v \in L$, $\deg(v,S) \leq \delta_0 d$ or for all $v \in R$, $\deg(v,S) \leq \delta_0 d$ then $$\|\vecx_S\|_1 \leq \frac{\rho_0}{1+\rho_0} \|\vecx\|_1 + \frac{\tau_0}{1+\rho_0} \sum_{v \in L} \|\cC_0 \vecx_{\Gamma(v)}\|_1  \leq \frac{\rho_0}{1+\rho_0} \|\vecx\|_1 + \frac{\tau_0}{1+\rho_0} \|A \vecx\|_1$$
\end{lemma}
\begin{proof}
Let us assume for all $v \in L$, $\deg(v,S) \leq \delta_0 d$. In this case, for every $v \in L$, we can apply the robust null space property for $\cC_0$:
$$\|\vecx_{\Gamma(v,S)}\|_1 \leq \rho_0 \|\vecx_{\Gamma(v)\setminus \Gamma(v,S)}\|_1 + \tau_0 \|\cC_0x_{\Gamma(v)}\|_1 $$ 
Adding up these inequalities for all $v \in L$, we obtain:
$$\|\vecx_S\|_1 \leq \rho_0 \|\vecx_{\bar S}\|_1 + \tau_0 \sum_{v \in L} \|\cC_0 \vecx_{\Gamma(v)}\|_1 $$
Adding $\rho_0 \|\vecx_S\|_1$ to both sides of the above inequality we get:  
$$\|\vecx_S\|_1 \leq \frac{\rho_0}{1+\rho_0} \|\vecx\|_1 + \frac{\tau_0}{1+\rho_0} \sum_{v \in L} \|\cC_0 \vecx_{\Gamma(v)}\|_1  \leq \frac{\rho_0}{1+\rho_0} \|\vecx\|_1 + \frac{\tau_0}{1+\rho_0} \|A \vecx\|_1$$
where the last line uses the fact that by definition of $A$, we have $$\|A \vecx\|_1 = \sum_{v \in L} \|\cC_0 \vecx_{\Gamma(v)}\|_1 + \sum_{v \in R} \|\cC_0 \vecx_{\Gamma(v)}\|_1$$
The case when for all $v \in R$, $\deg(v,S) \leq \delta_0 d$ can be handled similarly 
\end{proof}

The above lemma shows that if for all $v\in L$ (or for all $v\in R$), $\deg(v,S)\leq \delta_0 d$ then we can prove the robust null space property for $A$. But the issue is this may not be true for an arbitrary choice of $S$. The idea in this case would be to create a finite sequence of sets $T_1,T_2,\dots,T_\ell$ such that $T_1 \cup T_2 \cup \cdots \cup T_\ell=S$, and for all $i,j\in[m]$, $T_i \cap T_j=\emptyset$. Moreover for all $i\in [\ell]$, either for all $v\in L$, $\deg(v,T_i)\leq \delta_0 d$ or for all $v\in R$, $\deg(v,T_i)\leq \delta_0 d$. If we are able to create such a sequence of sets, then we can use the above argument for each of the $\ell$ sets and add them to get the robust null space property.  

Let $S_0=S$. For odd $i\in[\ell]$, let $$S_i = \{e = (u,v) \mid e \in S_{i-1}, \deg(u,S_{i-1}) > \delta_0 d\}$$ and for even $i$, let $$S_i = \{ e = (u,v) \mid e \in S_{i-1}, \deg(v,S_{i-1}) > \delta_0 d\}.$$ Also, for odd $i$, define $$T_i =\{e = (u,v) \mid e \in S_{i-1}, \deg(u,S_{i-1}) \leq \delta_0 d\}$$ and for even $i$, let $$T_i = \{e = (u,v): e \in S_{i-1}, \deg(v,S_{i-1}) \leq \delta_0d\}$$  Observe that for all $i\in[m]$ $T_i=S_{i-1}\setminus S_i$ and so, the $T_i$'s are all disjoint. For odd $i$ we define the the set of vertices incident to the edges in the set $S_i$ as follows  $$V_i =\{v\in L \mid v\textnormal{ is incident to an edge in } S_i\}$$ and for even $i$, let $$V_i =\{v\in R \mid v\textnormal{ is incident to an edge in } S_i\}$$
The following lemma proves the finiteness of the sequence.
\begin{lemma}\label{zemor}
If $S_i$ is nonempty, then $|V_i|>|V_{i+1}|$.  
\end{lemma}
\begin{proof}
The proof is identical to Z\'emor's \cite{Zemor01} analysis of his decoder for Tanner codes! For completeness, we give the analysis.

First, note that $S_i \subseteq S$, and so, $|S_i| \leq \delta Nd$. On the other hand, $|S_i| > \delta_0 d \cdot |V_i|$. Hence, $$|V_i| < \frac{\delta N}{\delta_0}$$

Next, we examine $|E(V_i, V_{i+1})|$. On one hand, $E(V_i,V_{i+1}) \supseteq S_{i+1}$, and so, $|E(V_i, V_{i+1})| > \delta_0 d |V_{i+1}|$. On the other hand, by the Expander Mixing Lemma (\cref{expmix}), 
$$|E(V_i,V_{i+1}| \leq \frac{d|V_i||V_{i+1}|}{N} + \lambda\sqrt{|V_i| |V_{i+1}|} \leq \frac{d|V_i||V_{i+1}|}{N} + \lambda\frac{|V_i|+ |V_{i+1}|}{2} \leq \frac{\delta d}{\delta_0}|V_{i+1}| + \lambda\frac{|V_i|+ |V_{i+1}|}{2}
$$
where the second inequality is AM-GM and the third inequality uses the bound on $|V_i|$ from above. Combining the upper and lower bounds on $|E(V_i,V_{i+1})|$, we obtain:
$$
|V_{i+1}| < \frac{\nfrac{\lambda}{2}}{\delta_0 d-\nfrac{\delta d}{\delta_0} -\nfrac{\lambda}{2}} |V_i|
$$
Using $\delta = \delta_0^2/4$, $d > 16/\delta$, and $\lambda < 3\sqrt{d}$, we get that $|V_{i+1}| < |V_i|/3$.
\ignore{
Without loss of generality, assume $i$ is an odd number. Let $|V_{i}|=q$ and $|V_{i+1}|=t$. To prove the lemma it is sufficient to show that $\frac{t}{q}$ $<$ 1. Let $|E(V_i,V_{i+1})|$ denote the number of edges between $V_i$ and $V_{i+1}$.  The proof comes from the observation that on one hand, by definition of the sets, $|E(V_i,V_{i+1})| > \delta_0dt$, and on the other hand, by \cref{cor:mix}, $|E(V_i,V_{i+1})| \leq \frac{dqt}{N} + \lambda\sqrt{qt}$. Using the AM-GM inequality yields:
$$\frac{t}{q}(\delta_0d - \frac{\lambda}{2}) < \frac{dt}{N} + \frac{\lambda}{2}$$ 
Now recall that: $|S| \geq \delta_0 d t$ but on the other hand, $|S|\leq \delta d N\leq \delta_0 (\delta_0-\frac{2\lambda}{d})dN$. So:
$$t \leq \frac{|S|}{\delta_0 d} \leq \frac{\delta dN}{\delta_0 d} = (\delta_0 - \frac{2\lambda}{d})N$$
Putting it all together, we get:
\begin{align*} 
&\frac{t}{q} < \frac{\delta_{0}d - \frac{3\lambda}{2} }{\delta_0d - \frac{\lambda}{2}} < 1 
\end{align*}
Since $\frac{t}{q}$ $<$ 1 we have $|V_i|>|V_{i+1}|$. }
\end{proof}
Now, we can finish the proof of \cref{thm:sec}. For each $T_i$, we can separately apply \cref{basic}. For all $i$ (whether odd or even): 
$$\|\vecx_{T_i}\|_1 \leq \frac{\rho_0}{1+\rho_0}\|\vecx_{\Gamma(V_i)}\|_1 + \frac{\tau_0}{1+\rho_0}\sum_{v\in V_i}\|\cC_0 \vecx_{\Gamma(v)}\|$$
Summing over all $i$ (odd and even), we obtain:
$$\|\vecx_{S}\|_1 \leq  \frac{2\rho_0}{1+\rho_0}\|\vecx\|_1 + \frac{\tau_0}{1+\rho_0}\|A\vecx\|_1 $$
The factor $2$ appears because each edge is incident on two vertices. Rearranging, we get:
$$\|\vecx_{S}\|_1 \leq  \frac{2\rho_0}{1-\rho_0}\|\vecx_{\overline{S}}\|_1 + \frac{\tau_0}{1-\rho_0} \|A\vecx\|_1$$
\end{proof}

To complete the proof of \cref{thm:main}, we are just left to argue that the matrix $A$ has $O(\log\frac{1}{\delta})$ ones per column and $O(\sqrt{\delta}\log\frac{1}{\delta}Nd)$ many rows. We know that the number of 1's per column in $\cC_0$ is $\leq 100\log\frac{1}{\delta_0}$. An edge $e$ in the graph $H$ can be incident to at most 2 vertices. Hence, the number of 1's in any column indexed by the edges in the matrix $A$ is at most $200\log\frac{1}{\delta_0}$. Substituting $\delta=\delta_0(\delta_0-\frac{2\lambda}{d})\approx \delta_o^2$ we get the the number of ones in any column indexed by the edges in the matrix $A$ is at most $200\log\frac{1}{\delta}$. 
The total number of rows in the matrix A is $2Nk$ and we know $k$ is  $\leq 100 \delta_0 d\log\frac{1}{\delta_0}$. This implies the number of rows in the matrix A is at most $200N\delta_0d\log\frac{1}{\delta_0}$, which is approximately equal to $100\sqrt{\delta}\log\frac{1}{\delta}\cdot Nd$. 

\bibliographystyle{alpha}
\bibliography{sparse}

\newcommand{\etalchar}[1]{$^{#1}$}
\begin{thebibliography}{CRVW02}

\bibitem[BGI{\etalchar{+}}08]{BerindeGIKS08}
Radu Berinde, Anna Gilbert, Piotr Indyk, Howard Karloff, and Martin Strauss.
\newblock Combining geometry and combinatorics: a unified approach to sparse
  signal recovery.
\newblock In {\em Proc.\ 46th Annual Allerton Conference on Communication,
  Control, and Computing}, pages 798--805, 2008.

\bibitem[BI08]{BerindeI08}
Radu Berinde and Piotr Indyk.
\newblock Sparse recovery using sparse random matrices.
\newblock Technical report, 2008.
\newblock Available at \url{people.csail.mit.edu/indyk/report.pdf}.

\bibitem[BIR08]{BerindeIR08}
Radu Berinde, Piotr Indyk, and Milan Ru\v{z}i\'c.
\newblock Practical near-optimal sparse recovery in the l1 norm.
\newblock In {\em Allerton}, 2008.
\newblock Available at \url{http://people.csail.mit.edu/indyk/smp.pdf}.

\bibitem[CDD09]{CohenDD09}
Albert Cohen, Wolfgang Dahmen, and Ronald DeVore.
\newblock Compressed sensing and best $k$-term approximation.
\newblock {\em Journal of the American Mathematical Society}, 22(1):211--231,
  2009.

\bibitem[Cha08]{Chandar08}
Venkat Chandar.
\newblock A negative result concerning explicit matrices with the restricted
  isometry property.
\newblock Technical report, 2008.
\newblock Available at
  \url{http://dsp.rice.edu/sites/dsp.rice.edu/files/cs/Venkat_CS.pdf}.

\bibitem[Che11]{Cheraghchi11}
Mahdi Cheraghchi.
\newblock Coding-theoretic methods for sparse recovery.
\newblock In {\em Proc.\ 49th Annual Allerton Conference on Communication,
  Control, and Computing}, pages 909--916, 2011.

\bibitem[CR06]{CandesR06}
Emmanuel~J Cand{\`e}s and Justin Romberg.
\newblock Quantitative robust uncertainty principles and optimally sparse
  decompositions.
\newblock {\em Found. Comput. Math.}, 6(2):227--254, 2006.

\bibitem[CRT06a]{CandesRT06a}
Emmanuel~J Cand{\`e}s, Justin Romberg, and Terence Tao.
\newblock Robust uncertainty principles: Exact signal reconstruction from
  highly incomplete frequency information.
\newblock {\em IEEE Trans. Inform. Theory}, 52(2):489--509, 2006.

\bibitem[CRT06b]{CandesRT06b}
Emmanuel~J Cand{\`e}s, Justin~K Romberg, and Terence Tao.
\newblock Stable signal recovery from incomplete and inaccurate measurements.
\newblock {\em Communications on pure and applied mathematics},
  59(8):1207--1223, 2006.

\bibitem[CRVW02]{CapalboRVW02}
Michael Capalbo, Omer Reingold, Salil Vadhan, and Avi Wigderson.
\newblock Randomness conductors and constant-degree lossless expanders.
\newblock In {\em Proc.\ 34th Annual ACM Symposium on the Theory of Computing},
  pages 659--668. ACM, 2002.

\bibitem[CT05]{CandesT05}
Emmanuel~J Cand{\`e}s and Terence Tao.
\newblock Decoding by linear programming.
\newblock {\em IEEE Trans. Inform. Theory}, 51(12):4203--4215, 2005.

\bibitem[CT06]{CandesT06}
Emmanuel~J Candes and Terence Tao.
\newblock Near-optimal signal recovery from random projections: Universal
  encoding strategies?
\newblock {\em IEEE Trans. Inform. Theory}, 52(12):5406--5425, 2006.

\bibitem[Don06a]{Donoho06}
David~L Donoho.
\newblock Compressed sensing.
\newblock {\em IEEE Trans. Inform. Theory}, 52(4):1289--1306, 2006.

\bibitem[Don06b]{Donoho06b}
David~L. Donoho.
\newblock For most large underdetermined systems of equations, the minimal
  $\ell_1$-norm near-solution approximates the sparsest near-solution.
\newblock {\em Comm. Pure Appl. Math.}, 59(7):907--934, 2006.

\bibitem[FR13]{FoucartR13}
Simon Foucart and Holger Rauhut.
\newblock {\em A {M}athematical {I}ntroduction to {C}ompressive {S}ensing}.
\newblock Applied and Numerical Harmonic Analysis. Springer, 2013.

\bibitem[GLR10]{GuruswamiLR10}
Venkatesan Guruswami, James~R Lee, and Alexander Razborov.
\newblock Almost euclidean subspaces of $\ell_1^n$ via expander codes.
\newblock {\em Combinatorica}, 30(1):47--68, 2010.

\bibitem[GLW08]{GuruswamiLW08}
Venkatesan Guruswami, James~R Lee, and Avi Wigderson.
\newblock Euclidean sections of $\ell_1^n$ with sublinear randomness and
  error-correction over the reals.
\newblock In {\em Approximation, Randomization and Combinatorial Optimization.
  Algorithms and Techniques}, pages 444--454. Springer, 2008.

\bibitem[GUV09]{GuruswamiUV09}
Venkatesan Guruswami, Christopher Umans, and Salil Vadhan.
\newblock Unbalanced expanders and randomness extractors from
  {P}arvaresh--{V}ardy codes.
\newblock {\em J. ACM}, 56(4):20, 2009.

\bibitem[HLW06]{HooryLW06}
Shlomo Hoory, Nathan Linial, and Avi Wigderson.
\newblock Expander graphs and their applications.
\newblock {\em Bulletin of the American Mathematical Society}, 43(4):439--561,
  2006.

\bibitem[Ind08]{Indyk08}
Piotr Indyk.
\newblock Explicit constructions for compressed sensing of sparse signals.
\newblock In {\em Proc.\ 19th {ACM}-{SIAM} {S}ymposium on {D}iscrete
  {A}lgorithms}, pages 30--33, 2008.

\bibitem[IR08]{IndykR08}
Piotr Indyk and Milan Ru{\v{z}}i{\'{c}}.
\newblock Near-optimal sparse recovery in the l1 norm.
\newblock In {\em Proc.\ 49th Annual IEEE Symposium on Foundations of Computer
  Science}, pages 199--207. IEEE, 2008.

\bibitem[LPS88]{LubotzkyPS88}
Alexander Lubotzky, Ralph Phillips, and Peter Sarnak.
\newblock Ramanujan graphs.
\newblock {\em Combinatorica}, 8(3):261--277, 1988.

\bibitem[Mar73]{Mar73}
Grigorii~Aleksandrovich Margulis.
\newblock Explicit constructions of concentrators.
\newblock {\em Probl. Pered. Inform.}, 9(4):71--80, 1973.

\bibitem[SS96]{SipserS96}
Michael Sipser and Daniel~A Spielman.
\newblock Expander codes.
\newblock {\em IEEE Trans. Inform. Theory}, 42(6):1710--1722, 1996.

\bibitem[Tao]{Taoexplicitblog}
Terence Tao.
\newblock
  \url{http://terrytao.wordpress.com/2007/07/02/open-question-deterministic-uup-matrices/}.
\newblock Accessed: July 06, 2014.

\bibitem[TU06]{TashmaU06}
Amnon {Ta-Shma} and Christopher Umans.
\newblock Better lossless condensers through derandomized curve samplers.
\newblock In {\em Proc.\ 47th Annual IEEE Symposium on Foundations of Computer
  Science}, pages 177--186. IEEE, 2006.

\bibitem[TUZ07]{TashmaUZ07}
Amnon {Ta-Shma}, Christopher Umans, and David Zuckerman.
\newblock Lossless condensers, unbalanced expanders, and extractors.
\newblock {\em Combinatorica}, 27(2):213--240, 2007.

\bibitem[XH07]{XuH07}
Weiyu Xu and Babak Hassibi.
\newblock Efficient compressive sensing with deterministic guarantees using
  expander graphs.
\newblock In {\em Information Theory Workshop}, pages 414--419. IEEE, 2007.

\bibitem[Z{\'e}m01]{Zemor01}
Gill{\'e}s Z{\'e}mor.
\newblock On expander codes.
\newblock {\em IEEE Trans. Inform. Theory}, 47(2):835--837, 2001.

\end{thebibliography}

\end{document}